\newtheorem{definition}{\textbf{Definition}}[section]
\newtheorem{property}{\textbf{Property}}[section]
\newtheorem{theorem}{\textbf{Theorem}}[section]
\thanks{JM Gorce is with Universit\'e de Lyon, INSA Lyon, Inria, CITI, 
F-69621 Villeurbanne, France}%
\thanks{H.V. Poor is with Electrical Engineering Dept, Princeton University, Princeton, NJ-08544, USA}%
\thanks{J-M Kelif is with Orange Labs. Issy Les Moulineaux, France} }
\begin{document}
\makeRR   
\newpage
\tableofcontents
\newpage

\section{Introduction}
Asymmetric spatial continuum channels refer to scenarios in which either a unique transmitter sends information to a spatial density of receivers (i.e., a spatial continuum BC, abbreviated  SCBC), or in which a spatial density of transmitters send independent information streams to a unique receiver (i.e., a spatial continuum MAC, or SCMAC). The unique transmitter or receiver is called the base station (BS) and the distributed users are characterized with a spatial density function $ u(x)$.

Asymmetric dense wireless networks correspond to classical scenarios in modern communication systems. Basically, cellular systems alternate downlink and uplink transmissions which respectively correspond to BC and MAC. 
Information theory provides exact expression for the capacity regions for the classical BC and MAC under memoryless stationary Gaussian  assumptions \cite{el2011network}, for a predetermined set of users with fixed channel pdfs. But the capacity region of these  asymmetric wireless networks where the users are distributed according to a probability distribution has not been determined. 
Stochastic geometry allows a step forward by providing an estimate of the SINR distribution \cite{andrews2011tractable,elsawy2013stochastic} in cellular networks with randomly placed users. However, to the best of our knowledge, all derivations of cell rates from these distributions have been based on pure time sharing strategies, thus underestimating the capacity region of the cell.
In \cite{sang2014load}, the cell load is computed from an approximation of the cells' size distribution. But this work estimates the requested sum-rate per cell (the throughput demand), and not  the capacity of the radio access. 
In \cite{bonald2003wireless} and more recently in \cite{minelli2014optimal}, efficient cell capacity metrics are defined. But these metrics do not correspond to the Shannon capacity of the corresponding BC but rather to a rate achievable with some time-sharing. The gap between these results and a fundamental limit is not known.
In \cite{gorce2014energy} the fundamental energy efficiency - spectral efficiency tradeoff (EE-SE) in a dense typical cell was evaluated, which may be interpreted as a capacity although this was not explicitly proved. Only some achievable EE-SE tradeoffs were provided. 
The work  proposed herein may be seen as an extension of this former work,  using an information theoretic formalism. In this setting, the capacity regions of the SCBC and the SCMAC are defined and determined.

The main results of this paper are the following;
\begin{itemize}
\item The uniform capacities of the SCBC and SCMAC are defined and computed for Gaussian stationary memoryless channels.
\item The access capacity regions of the SCBC and SCMAC are assessed under the same assumptions.
\item The tightness of this model for a simple scenario is illustrated.
\end{itemize}

\section{Model and notations}
Although the approach can be more general, the problem studied in this paper is restricted to Gaussian channels where the BS and users are equipped with single antennas.
The maximum rate simultaneously achievable by all users (the symmetric rate) \cite{liang2006cth13} is first investigated. To avoid confusion between  symmetry/asymmetry of channels and rates, we will rather refer to this assumption as the {\it uniform rates} assumption. The uniform capacity is then defined and computed.
This result is extended to compute the access capacity region, defined as the set of achievable rate densities. 

The first asymmetric channel studied bellow is the SCBC. Then, the MAC/BC equivalence stated in \cite{jindal2004duality} is used to establish the SCMAC capacity when power is assumed  to be transferable between users. 

Consider a unique BS serving a large area with a high number of users. The area covered by the BS is denoted by $\Omega \subset \mathbb{R}^2$.  We denote by $(\Omega, \mathcal{A}, m)$  the corresponding measurable space with $ \mathcal{A}$ the Lebesgue $\sigma-$ algebra and $m$ the Lebesgue measure.
Let $x$ be a point in $\Omega$.

\begin{figure}[!t]
\centering
\includegraphics[width=4.5in]{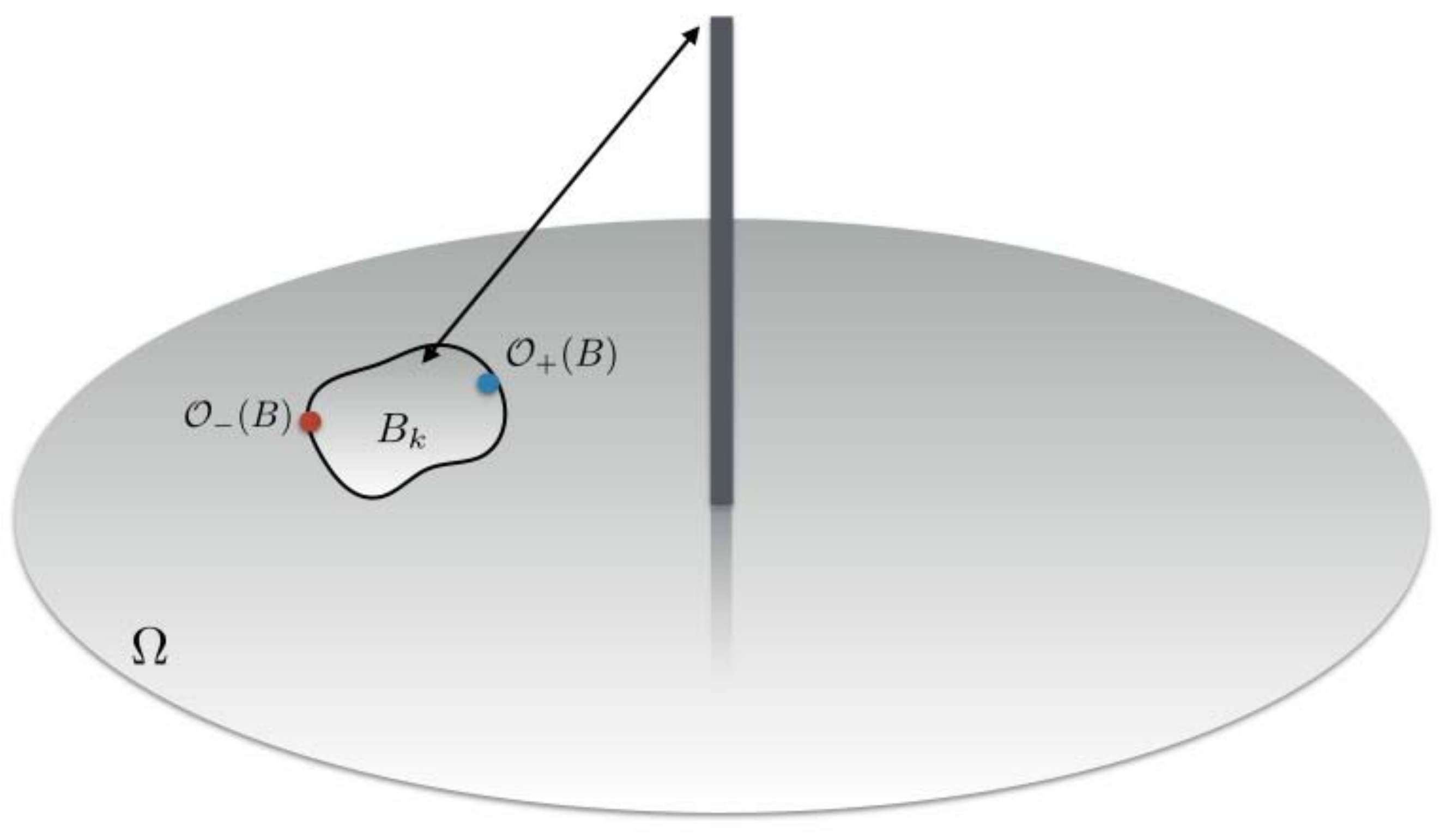}%
\hfil
\caption{Network model and feasible node. A physically feasible node (Def.2.2) can associated with any subset $B\in \mathcal{A}$. Its requested rate $\mathcal{R}(B)$ is obtained by the integration of the rate density over $B$ (Prop.2.1). Worst and best receivers, resp. $\mathcal{0}_-$ and $\mathcal{0}_+$ are two points of $B$.}
\label{fig:first}
\end{figure}
Without lack of generality, the BS is  assumed to be located at point $(0,0)$. Users appear randomly in time and space on $\Omega$. As such, they are not described by a discrete set but through a probability function $u(x)$ representing the probability density that a node appears at $x$. Denoting by $\tilde{u}(x,t)$ a realization of the ergodic random process with probability density function (pdf) $u(x)$, then, for any subset $B\in\mathcal{A}$, the average number of users is given by 
\begin{equation}
U(B)= \lim_{T\rightarrow\infty}\frac{1}{T}\int_{t\in T} \int_{x\in B} \tilde{u}(x,t) \cdot dx \cdot dt .
\end{equation}
The global average number of users associated with the whole space $\Omega$ is denoted by $U_T$.

\begin{definition}[Requested rate density]
The requested rate density $\rho(x): \Omega \rightarrow \mathbb{R}$ is a measurable function  that represents the information rate density requested at point $x$. 
\end{definition}
\begin{itemize}
\item Note 1: In the definition above, $\rho(x)$ is normalized by the system bandwidth and is expressed in $bps$ per $Hz$ per $m^{2}$. 
\item Note 2: This definition is valid as well for SCBC and SCMAC scenarios. The requested rate may represent either an uplink or a downlink stream.
\end{itemize}
When a uniform rate per user is considered,  each user requests the same quantity  of information denoted by $\mathcal{I}_0$. 
Therefore, the requested rate density is proportional to the users' density:
\begin{equation}
\rho(x)=\mathcal{I}_0 \cdot u(x) .
\end{equation}


\begin{property}
\label{prop:Irate}
The rate requested by a subset $B\subset\Omega$ is given by:
\begin{equation}
\label{eq:IB}
\mathcal{R}(B)\leq\int_B \rho(x)\cdot m(dx) .
\end{equation}
with equality if all requested information streams are independent.
\end{property} 
Assuming independence for the SCBC is equivalent to considering BC with private messages only while for the SCMAC, independence means that the sources are not correlated. Independence will be assumed throughout the rest of this paper.

The difficulty of this approach is to give a physical meaning of a spatial density of information . Indeed, what does a rate density represent physically? Our approach follows three steps: (i) We first partition the spatial continuum into subsets, each of them containing a reference node, called a {\it physically feasible node} (see Fig.\ref{fig:first}); 
(ii) we then introduce a splitting process to iteratively refine the discretization providing a sequence of partitions
(iii) Finally, we use the limit of this sequence to define and compute an asymptotic capacity region. 

In this context, we define:
  \begin{definition}[Physically feasible node]
\label{def:physically}
For any subset $B\in \mathcal{A}$,  a physically feasible node is a virtual node (either in Tx or Rx mode) that requests the quantity of information  $\mathcal{R}(B)$ given by \eqref{eq:IB}.
\end{definition}

Then, let  $\mathcal{B}=\left\{B_k;k\in [1;K]\right\}$ be a partition of $\Omega$ with $B_k \in \mathcal{A} ; \forall k$. According to Def.\ref{def:physically}, a physically feasible node $v_k$ is associated with each element $B_k$.
Therefore, the BS and the set of nodes $\{v_1,\dots,v_K\}$ form either a $K-$user BC or a $K-$user MAC. This system, denoted by $\mathfrak{N}(\mathcal{B})$, is called a physically feasible network.

\begin{definition}[Sequence of physically feasible networks]
\label{def:seqPFN}
Consider a sequence of partitions $\mathcal{B}^{(i)}$ for $i\in\mathbb{N}$ with  $\mathcal{B}^{(0)}=\left\{B_0^{(0)}=\Omega\right\}$ and where a splitting process divides each set $B^{(i)}_{k}$ into two subsets $\left\{B^{(i+1)}_{2k},B^{(i+1)}_{2k+1}\right\}$.  
A physically feasible node $v^{(i)}_k$ is associated with each $B^{(i)}_{k}$.
The sequence of networks $\left\{\mathfrak{N}(\mathcal{B}^{(i)})\right\}$ is called a sequence of {\it physically feasible networks} and is denoted by $\mathfrak{N}^{(i)}$
\end{definition}
When $i\rightarrow\infty$, the size of $\mathfrak{N}^{(i)}$ grows to infinity while the individual requested rates tend to $0$. In the meantime the requested sum-rate remains constant. 

\section{Gaussian-SCBC: definition and properties}
The previous formalism is used to derive the following results for the Gaussian-SCBC:
\begin{itemize}
\item The minimal power needed to serve a given requested rate density is established. 
\item As a corollary, the uniform capacity of the SCBC is given.
\item The access capacity region of the SCBC is also formulated as the set of feasible rate densities. 
\end{itemize}
The Gaussian-SCBC characterizes the outputs associated to the users distribution over $\Omega$ as a function of an input symbol $y$ at the BS.

Let $\Xi$ be the set of piecewise integrable continuous functions $\Xi = \left\{ \Phi(x): \Omega \rightarrow \mathbb{R}^d\right\}$ where $\Phi(x)$ is an observable field on $\Omega$. $d$ is the dimension of the field. For instance, $d=1$ for a real channel while for a complex baseband signal, $d=2$. $d$ could be even larger, e.g. for multi-antennas or multiband receivers.  

\begin{definition}[Gaussian-SCBC]
\label{def:G-SCBC}
Given the following:
\begin{itemize}
\item $\Omega$, a subspace on a Hilbert space of dimension 2,
\item $x_0=(0,0) \in \Omega$, the source,
\item$\mathcal{Y}_c$, the coding alphabet used by the source to transmit a symbol $y \in \mathcal{Y}_c$,
\item $\Xi = \left\{ \Phi(x): \Omega \rightarrow \mathbb{R}^d\right\}$ a set of fields on $\Omega$,
\end{itemize}
The SCBC is a function that maps any input code $y$  to a set of conditional pdfs on $\Xi$:
\begin{equation}
\mathcal{H} :  \left\{\mathcal{P}_{\Phi(.)|y} ; \forall y\in \mathcal{Y}_c\right\},
\end{equation}
\end{definition}
which extends the classical BC definition to include a spatial continuum of potential receivers.

The set of conditional pdfs associated with the Gaussian-SCBC is
\begin{equation}
\label{eq:AWGN}
\mathcal{P}_{\Phi(x)|y} = \mathcal{N}\left( y ,\frac{\sigma^2}{ h(x)} \right),
\end{equation}
where $\mathcal{N}(\mu,v)$ denotes for the normal distribution of mean $\mu$ and variance $v$.  In \eqref{eq:AWGN} the channel gain is normalized and the noise variance is proportional to the inverse of the channel gain coefficient $h(x)$. 

\subsection{Relative transmission technique}
A transmission technique is now defined with respect to (w.r.t.) a partition $\mathcal{B}$ of $\Omega$.
For each element $B_k$, the former definition of a physically feasible node is extended to that of a physically feasible receiver. 
\begin{definition}[Physically feasible receiver]
A {\it physically feasible receiver} associated with an element $B_k$ of a partition $\mathcal{B}$ is defined by two successive operations (observation and decoding) applied to the field $\Phi(x);x\in B_k$ :
\begin{equation}
\Phi_{x\in B_k} \overset{\mathcal{O}}{\longrightarrow} V_k \overset{\mathcal{D}}{\longrightarrow}   \hat{m}_k .
\end{equation}
\end{definition}
The observation operator $\mathcal{O}$ plays a fundamental role in  the proposed analytical approach: it extracts some information from the local field $\Phi(x); x\in B_k$. Then, the decoder conventionally maps this observation to an estimate $\hat{m}_k$.  While the observer is imposed as a part of the transmission system, the decoder is usually chosen to optimize the transmission.  

Let us now assume that a physically feasible receiver is only able to observe a unique point $x$ on $B_k$. Two observers are defined:  the best observer $\mathcal{O}_{+}$ which selects the least noisy sample over $B_k$, and the worst observer $\mathcal{O}_{-}$ which selects the most noisy sample.  When one of these observation techniques is associated with a partition $\mathcal{B}$, one obtains resp. the best and the worst physically feasible networks, denoted by $\mathfrak{N}_{+}(\mathcal{B})$ and $\mathfrak{N}_{-}(\mathcal{B})$.

\begin{definition}[Relative transmission technique]
A transmission technique $(M_1,\dots,M_K,n,\epsilon)$ associated with a network $\mathfrak{N}(\mathcal{B})$ is given by a joint information message $M$ (built from the $K$ individual messages) in $n$ channel uses and where each physically feasible receiver observes and decodes an estimate of the message $M_k$, denoted by $\hat{m}_k$ with an average error probability lower than $\epsilon$. 
A transmission technique is asymptotically feasible without error if $\lim_{n\rightarrow \infty}\epsilon = 0$.

\end{definition}
Under these assumptions and for a given observer, $\mathfrak{N}(\mathcal{B})$  is equivalent to a classical Gaussian-BC for which the capacity region is  known \cite{cover1972broadcast}.  

\subsection{Asymptotic achievability and converse}
First, let us note that the term \emph{asymptotic} does not refer as above to the coding length $n$ but rather to the partition index $i$ in the sequence $\mathfrak{N}^{(i)}$.
Let first the achievability be expressed w.r.t. a network $\mathfrak{N}(\mathcal{B})$:
\begin{definition}[Relative achievability]
\label{def:relative_achievability}
The requested rate density $\rho(x)$ is said to be achievable w.r.t. to a network  $\mathfrak{N}(\mathcal{B})$ if a transmission technique $(M_1,\dots,M_K,n,\epsilon)$ exists such that $|M_k|\geq 2^{n\cdot\mathcal{R}(B_k)}, \forall k$.
\end{definition}

\begin{definition}[Asymptotic achievability]
\label{def:asymptachievability}
Consider a sequence of physically feasible networks  $\mathfrak{N}^{(i)}$ (see Def.\ref{def:seqPFN}).
The requested  rate density$\rho(x)$ is said to be {\it asymptotically achievable} if $\rho(x)$ is achievable w.r.t. $\mathfrak{N}^{(i)}$ when $i\rightarrow\infty$.
The requested rate density $\rho(x)$ is said to be {\it doubly asymptotically achievable} if the transmission technique further satisfies $\lim_{n\rightarrow \infty}\epsilon = 0$.
\end{definition}
In the following, only  doubly asymptotic achievability is studied. 
Thus achievability is implicitly studied under the asymptotic regime.

\begin{definition}[Access Capacity region]
\label{def:accesscapacity}
The access capacity region of a Gaussian-SCBC, denoted by $\mathcal{U}_\Omega$,  is the set of doubly asymptotically achievable rate densities $\rho(x)$.
\end{definition}

Based on these definitions, we can state the two following theorems.
\begin{theorem}[Relative achievability with worst observers implies asymptotic achievability]
\label{theo:achievability}
Consider a sequence of physically feasible networks using the worst receivers $\mathfrak{N}_{-}^{(i)}$. 
If $\rho(x)$ is achievable w.r.t. $\mathfrak{N}_{-}^{(i)}$ for some $i\geq 0$, then $\rho(x)\in \mathcal{U}_\Omega$.
\end{theorem}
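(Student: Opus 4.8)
The plan is to exploit the degradedness of the Gaussian broadcast channel together with two monotonicity properties of the refinement: over a split the required rates add, whereas the effective noise seen by the worst observer can only decrease. First I would invoke the equivalence, already established above, between $\mathfrak{N}_{-}(\mathcal{B}^{(i)})$ and a classical $K$-user degraded Gaussian-BC \cite{cover1972broadcast} whose $k$-th branch has noise variance $N_k^{(i)} = \sigma^2/\inf_{x\in B_k^{(i)}} h(x)$, the most noisy sample being the one picked by the worst observer $\mathcal{O}_{-}$. Relative achievability w.r.t. $\mathfrak{N}_{-}^{(i)}$ (Def.~\ref{def:relative_achievability}) then means precisely that the rate vector $\big(\mathcal{R}(B_k^{(i)})\big)_k$ lies in the capacity region of this BC, attained by superposition coding under the fixed total power budget $P$.

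The core step is a single-split lemma propagating achievability from level $i$ to level $i+1$. When a cell $B_k^{(i)}$ is divided into $\{B_{2k}^{(i+1)},B_{2k+1}^{(i+1)}\}$, Property~\ref{prop:Irate} (under the standing independence assumption) gives $\mathcal{R}(B_k^{(i)}) = \mathcal{R}(B_{2k}^{(i+1)}) + \mathcal{R}(B_{2k+1}^{(i+1)})$, while $\inf_{B_{2k}} h,\ \inf_{B_{2k+1}} h \ge \inf_{B_k} h$ yields $N_{2k}^{(i+1)}, N_{2k+1}^{(i+1)} \le N_k^{(i)}$. I would first treat the two children as \emph{co-located} users both carrying the parent noise $N_k^{(i)}$: subdividing the parent power layer $P_k = P_a + P_b$ and stacking the two sublayers in the superposition, the telescoping identity
\begin{equation}
\Big(1 + \tfrac{P_a}{N_k^{(i)}+I}\Big)\Big(1 + \tfrac{P_b}{N_k^{(i)}+I+P_a}\Big) = 1 + \tfrac{P_a+P_b}{N_k^{(i)}+I}
\end{equation}
(where $I$ is the uncancellable interference from the better branches) shows that the two child rates sum to the parent rate with \emph{no change} to the background of any other user. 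Hence the split rate vector is achievable in the BC whose two new branches both carry noise $N_k^{(i)}$; since the true children satisfy $N_{2k}^{(i+1)}, N_{2k+1}^{(i+1)} \le N_k^{(i)}$, the monotonicity of the Gaussian-BC capacity region in the noise variances (a noisier channel being a stochastically degraded version of a less noisy one) shows the same rate vector remains achievable for $\mathfrak{N}_{-}^{(i+1)}$.

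Iterating this lemma over all cells split between levels $i$ and $i+1$ (applying it one cell at a time, with the already-split and not-yet-split branches fixed), and then inducting on $i$, I would obtain that achievability w.r.t. $\mathfrak{N}_{-}^{(i_0)}$ for the hypothesized $i_0$ forces achievability w.r.t. $\mathfrak{N}_{-}^{(j)}$ for every $j \ge i_0$, hence as $j\rightarrow\infty$. Because the worst observer is the most pessimistic physically feasible receiver, achievability with $\mathcal{O}_{-}$ a fortiori implies achievability with the actual receiver underlying $\mathfrak{N}^{(j)}$; and since each relative transmission technique is by definition asymptotically error-free ($\epsilon\rightarrow 0$ as $n\rightarrow\infty$), the rate density is \emph{doubly} asymptotically achievable, i.e. $\rho(x)\in\mathcal{U}_\Omega$ (Def.~\ref{def:accesscapacity}).

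I expect the main obstacle to be the bookkeeping of the split lemma: one must verify that re-partitioning a single power layer leaves every other branch's signal-to-interference ratio, and thus its rate, untouched, and one must handle the awkward case where a child turns out to be less noisy than branches sitting below its parent in the degraded order. The equal-noise reduction followed by capacity-region monotonicity is exactly what sidesteps any re-sorting of that order; stating that reduction cleanly, rather than arguing directly about a potentially mis-ordered superposition, is the crux of the argument.
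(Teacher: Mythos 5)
Your proof is correct, and its overall skeleton is the same as the paper's: both arguments reduce the theorem to a single-split propagation lemma (achievability at level $i$ implies achievability at level $i+1$), iterated over cells and then over levels, using the rate additivity $\mathcal{R}(B_k^{(i)})=\mathcal{R}(B_{2k}^{(i+1)})+\mathcal{R}(B_{2k+1}^{(i+1)})$ from Property~\ref{prop:Irate} and the fact that each child's worst observer is at least as good as the parent's. Where you genuinely diverge is in the mechanism of that lemma. The paper keeps the level-$i$ code intact and simply \emph{time-shares} the parent layer between the two child messages, observing that one child inherits the parent's worst observation exactly ($V_{2k}^{(i+1)}=V_k^{(i)}$) while the other sits above it in the Markov chain $Y\rightarrow V_{2k+1}^{(i+1)}\rightarrow V_{2k}^{(i+1)}$, so the data processing inequality gives both children the full parent rate and time-sharing splits it as needed; this is channel-agnostic and needs no Gaussian rate formulas. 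You instead \emph{power-split} the parent superposition layer, use the telescoping SNR identity to show the child rates sum exactly to the parent rate with every other branch's interference untouched, and then invoke monotonicity of the Gaussian-BC capacity region in the noise variances to pass from the fictitious co-located children to the true, less noisy ones. Your route is Gaussian-specific but buys two things the paper's terser argument does not make explicit: an explicit level-$(i+1)$ superposition code (no time-sharing penalty in the construction, which matches how the power recursion is actually used in the proof of Theorem~\ref{theo:SCBC-unifcapa}), and a clean resolution --- via the degrade-yourself monotonicity argument --- of the re-ordering issue when a child becomes less noisy than branches below its parent in the degraded order, a point the paper's proof silently skips over. Conversely, the paper's DPI/time-sharing argument generalizes beyond the Gaussian case with no extra work. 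Both are valid proofs of the statement.
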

\begin{proof}
The proof relies on the properties of the degraded BC. 
The network $\mathfrak{N}_{-}^{(i)}$ forms a $K$-user BC with joint requested rates  $\left(\mathcal{R}_k^{(i)}=\mathcal{R}\left(\mathcal{B}_k^{(i)}\right); \forall k \right)$.
$\rho(x)$ is achievable w.r.t. $\mathfrak{N}_{-}^{(i)}$ if these rates belong to its capacity region.
It is enough to prove that achievability w.r.t. $\mathfrak{N}_{-}^{(i)}$, implies achievability w.r.t. $\mathfrak{N}_{-}^{(j)}; \forall j>i$.

Let assume that the following transmission technique exists:  
\begin{equation*}
T^{(i)}=(M^{(i)}_1,\dots,M^{(i)}_K,n\rightarrow \infty,\epsilon\rightarrow 0).
\end{equation*}
We have to prove that $\exists T^{(i)} \Rightarrow \exists T^{(j)}; \forall j>i$.
Let us prove that when one set $B_k^{(i)}$ is split, the requested rates of its children $B_{2k}^{(i+1)}$ and  $B_{2k+1}^{(i+1)}$ are achievable simultaneously and jointly with the rates of the other sets.
Let the message $M_k$ be built by combining two messages:
\begin{equation}
\label{eq:subcoding}
\left(M^{(i+1)}_{2k},M^{(i+1)}_{2k+1}\right) \rightarrow M^{(i)}_{k}
\end{equation}
By definition, one of the two children has an observer identical to the parent's one (noted $V_{2k}^{(i+1)}$), while the other observation $V_{2k+1}^{(i+1)}$ is better than or equal to this one. 
These two observations and $Y$ form a  Markov chain \cite{el2011network}:
\begin{equation*}
Y\longrightarrow V_{2k+1}^{(i+1)} \longrightarrow V_{2k}^{(i+1)}.
\end{equation*} 

From Prop.\ref{prop:Irate} :
\begin{equation*}
\mathcal{R}_k^{(i)}=\mathcal{R}_{2k}^{(i+1)}+\mathcal{R}_{2k+1}^{(i+1)}.
\end{equation*}
Since $V_{2k}^{(i+1)}=V_{k}^{(i)}$, $\mathcal{R}_k^{(i)}$ is achievable from $V_{2k}^{(i+1)}$ and  by the data processing inequality,  also from $V_{2k+1}^{(i+1)}$.\\
Therefore, a simple time sharing in the implementation of \eqref{eq:subcoding} ensures simultaneously the desired rates.
Doing similarly for each $B_k$ proves that the relative achievability with the worst receiver is preserved with the splitting process and thus ensures asymptotic achievability.
\end{proof}

\begin{theorem}[Converse : relative non-achievability with best observers implies asymptotic non-achievability]
\label{theo:converse}
Consider a sequence of physically feasible networks (see Def.\ref{def:seqPFN}) using the best receivers $\mathfrak{N}_{+}^{(i)}$. 
If a requested rate density $\rho(x)$ is proved to be not achievable w.r.t. $\mathfrak{N}_{+}^{(i)}$ for some $i\geq 0$, then $\rho(x)$ is not asymptotically achievable.
\end{theorem}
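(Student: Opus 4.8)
The plan is to prove the contrapositive by establishing, for the best observers, a monotonicity of the achievable region under refinement that is exactly dual to the one exploited in Theorem~\ref{theo:achievability}. Concretely, I would show that achievability w.r.t. $\mathfrak{N}_{+}^{(i+1)}$ implies achievability w.r.t. $\mathfrak{N}_{+}^{(i)}$, i.e. that \emph{merging} two children back into their parent preserves achievability when the best observers are used. Iterating this single-step claim over every set that was split gives, for all $j>i$, that achievability w.r.t. $\mathfrak{N}_{+}^{(j)}$ implies achievability w.r.t. $\mathfrak{N}_{+}^{(i)}$; equivalently, the best-observer achievable regions form a nested, non-increasing sequence. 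Taking the contrapositive, non-achievability w.r.t. $\mathfrak{N}_{+}^{(i)}$ for some $i$ forces non-achievability w.r.t. $\mathfrak{N}_{+}^{(j)}$ for every $j>i$, hence in the limit.

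For the single merging step I would reuse the degraded-BC argument of Theorem~\ref{theo:achievability} with the roles of the two children exchanged. When $B_k^{(i)}$ is split into $\{B_{2k}^{(i+1)},B_{2k+1}^{(i+1)}\}$, the least noisy point of the whole parent lies in exactly one child, so that child inherits the parent's best observation; with the convention of the achievability proof this is $V_{2k+1}^{(i+1)}=V_k^{(i)}$, while the other child's best observation $V_{2k}^{(i+1)}$ is at least as noisy, so that $Y\longrightarrow V_{2k+1}^{(i+1)}\longrightarrow V_{2k}^{(i+1)}$ is a Markov chain. Since by Prop.~\ref{prop:Irate} the parent rate decomposes as $\mathcal{R}_k^{(i)}=\mathcal{R}_{2k}^{(i+1)}+\mathcal{R}_{2k+1}^{(i+1)}$, and since the merged node observes $V_k^{(i)}=V_{2k+1}^{(i+1)}$, the data processing inequality (degradedness of the Gaussian BC) guarantees that this least noisy observation can recover both sub-messages: it decodes $M_{2k+1}^{(i+1)}$ directly and $M_{2k}^{(i+1)}$ because the latter is already decodable from the degraded $V_{2k}^{(i+1)}$. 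Reading the combining map \eqref{eq:subcoding} in reverse and reusing the same channel input, the merged node therefore attains $\mathcal{R}_k^{(i)}$ jointly with all the unchanged rates of the remaining sets.

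It remains to tie the limit of the sequence to the access capacity region $\mathcal{U}_\Omega$. As $i\rightarrow\infty$ each set $B_k^{(i)}$ shrinks to a point, where the single-point best and worst observers coincide, so the limiting best-observer network describes the true continuum and its achievable region equals $\mathcal{U}_\Omega$. Consequently a density $\rho(x)$ excluded at some finite level $i$ is excluded from every finer level and hence from $\mathcal{U}_\Omega$, i.e. it is not doubly asymptotically achievable, which is the claim.

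The hard part will be the joint decoding step: arguing rigorously that the single best-point observer of the merged cell simultaneously recovers both sub-messages while every other node continues to decode at its prescribed rate. This requires that the construction remain a genuine degraded Gaussian BC after merging, that the superposition/time-sharing realization of \eqref{eq:subcoding} stay admissible under the global power constraint, and that the ordering of the gains $h(x)$ be invoked consistently — precisely the point where the Markov structure and the data processing inequality must be applied with care.
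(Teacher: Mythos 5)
Your proposal is correct and follows essentially the same route as the paper: both establish that with best observers the achievable region is non-increasing under refinement (merging two children, the parent's best observation coincides with the less noisy child's, so by the degradedness Markov chain $Y\rightarrow V_{2k+1}^{(i+1)}\rightarrow V_{2k}^{(i+1)}$ and the data processing inequality it decodes both sub-messages and attains the sum-rate), and then conclude by contraposition that non-achievability at level $i$ propagates to all finer levels and hence to the limit. The only differences are cosmetic (indexing of parent/children levels and your added remark tying the limit to $\mathcal{U}_\Omega$).
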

\begin{proof}
The network $\mathfrak{N}_{+}^{(i)}$ forms a $K$-user BC with joint requested rates  $\left(\mathcal{R}_k^{(i)}=\mathcal{R}\left(\mathcal{B}_k^{(i)}\right); \forall k \right)$.
$\rho(x)$ is not achievable w.r.t. $\mathfrak{N}_{+}^{(i)}$ if these rates do not belong to its capacity region.
This means that any transmission technique $T^{(i)}=(M^{(i)}_1,\dots,M^{(i)}_K,n,\epsilon)$ fails to achieve at least one requested rate $\mathcal{R}_k^{(i)}$.
To prove the theorem it is enough to prove that the capacity region shrinks with the splitting process, or equivalently to prove that achievability w.r.t. $\mathfrak{N}_{+}^{(i)}$ implies achievability w.r.t. $\mathfrak{N}_{+}^{(j)}; \forall j<i$.
Assuming that the rates $\mathcal{R}_{2k}^{(i)}$ and $\mathcal{R}_{2k+1}^{(i)}$ are simultaneously achievable, then it is obvious that the sum-rate is also achievable by the best receiver according to the following Markov chain relationship:
\begin{equation*}
\left(M^{(i)}_{2k},M^{(i)}_{2k+1}\right) \rightarrow M^{(i-1)}_{k} \rightarrow V_{2k}^{(i)} \rightarrow V_{2k+1}^{(i)} .
\end{equation*}
Again, the data processing inequality implies that the link $M^{(i-1)}_{k}  \rightarrow V_{2k}^{(i)}$ has a higher capacity than the end-to-end link,
which proves that the splitting process with the best receiver reduces achievability.
\end{proof}

To illustrate these two theorems, consider the initial partition $\mathcal{B}^{(0)}=\left\{\Omega\right\}$. 
The best and worst observers correspond respectively to the best and the worst users over $\Omega$. The theorems above state that $\rho(x)$ is achievable if its sum-rate $\mathcal{R}(\Omega) \leq \mathcal{C}(\mathfrak{N}_{-}^{(0)})$ and cannot be achievable if $\mathcal{R}(\Omega) \geq \mathcal{C}(\mathfrak{N}_{+}^{(0)})$.
These bounds  are obvious but not tight. 

However, when $i\rightarrow \infty$, the capacity regions with the worst and the best receivers may converge asymptotically to the same region, which is then defined as the capacity region of the Gaussian-SCBC. The convergence is granted if the splitting process is done such that the difference between the best and the worst observations tends to $0$ for each $B_k^{(i)}$. 

\subsection{Uniform capacity}
The uniform capacity of the SCBC is defined as follows.
\begin{definition}[Uniform capacity]
Consider a density of users $u(x)$, and a BS with a maximal transmission power $P_M$. The uniform capacity is defined by $\mathcal{C}_0 = \sup_{\rho(x)\in \mathcal{U}_\Omega} ( \mathcal{I}_0 )$.
\end{definition}
As a corollary, we can define the minimal power required to serve a given $\mathcal{I}_0$:
\begin{definition}[Minimal power]
Consider a distribution of users $u(x)$, with an average information request $\mathcal{I}_0$. The minimal power $\tilde{P}_m$ required to ensure the asymptotic achievability is defined by $\tilde{P}_m = \min_{P\in \mathbb{R}} \left[ P ; \rho(x)\in \mathcal{U}_\Omega(P)\right] $.
\end{definition}

These definitions are duals, but it  more convenient to compute $\tilde{P}_m$ as a function of $\mathcal{I}_0$.
Let be defined the complementary cumulated distribution function (ccdf) of the users' noise variance as follows:
\begin{equation}
\label{eq:ccdf}
G_\nu(\nu)=\frac{1}{U_T} \cdot \int_\Omega u(x)\cdot\mathbb{1}\left[\nu(x)\geq \nu\right] \cdot dx.
\end{equation}
This function is related to the noise pdf as follows:
\begin{equation}
f_\nu(\nu)=-\frac{\delta}{\delta \nu}G_\nu(\nu). 
\end{equation}

Finally, define the Gaussian SCBC spectral efficiency $\eta_s=\log(2)\cdot \mathcal{I}_0 \cdot  U_T$, which represents the system efficiency in nats per channel use. 

\begin{theorem}[Gaussian-SCBC Minimal power]
\label{theo:SCBC-unifcapa}
The minimal power required to serve a given  user density $u(x)$ and a quantity of information $\mathcal{I}_0$ is given by
\begin{equation}
\tilde{P}_m = 2 \eta_s \int_{\nu_m}^{\nu_M} x\cdot f_\nu(x) \cdot e^{2 \eta_s \cdot  G_\nu(x)} \cdot dx 
\end{equation}
\end{theorem}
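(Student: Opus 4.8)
The plan is to fix a partition $\mathcal{B}^{(i)}$, read the associated network as a finite degraded Gaussian broadcast channel, compute its \emph{minimal} power exactly by superposition coding, and then let $i\to\infty$. Theorem~\ref{theo:achievability} and Theorem~\ref{theo:converse} are what make the limit meaningful: the worst-observer network $\mathfrak{N}_{-}^{(i)}$ supplies an achievable power (hence an upper bound on $\tilde P_m$), the best-observer network $\mathfrak{N}_{+}^{(i)}$ supplies a converse (a lower bound on $\tilde P_m$), and the refinement drives the two bounds together. Throughout I would order the $K$ physically feasible nodes by increasing noise variance, $\nu_1\le\dots\le\nu_K$, so that node $1$ is best and node $K$ worst; by the BC equivalence already noted for $\mathfrak{N}(\mathcal{B})$ \cite{cover1972broadcast}, this is a degraded Gaussian BC whose boundary is achieved by superposition with successive cancellation.

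First I would solve the finite problem. Writing the required rate of node $k$ in nats as $R_k=\log(2)\,\mathcal{R}_k^{(i)}=\eta_s\,\Delta_k$, where $\Delta_k=U(B_k^{(i)})/U_T$ is the user mass of the layer, and letting node $k$ cancel the coarser layers while treating the finer (better) layers $1,\dots,k-1$ as noise, each layer consumes exactly the power needed for its rate:
\begin{equation*}
P_k=(\nu_k+I_{k-1})\bigl(e^{2R_k}-1\bigr),\qquad I_k=I_{k-1}+P_k,\qquad I_0=0,
\end{equation*}
so that $I_k=(\nu_k+I_{k-1})e^{2R_k}-\nu_k$ and the total (minimal) power is $I_K$. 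This allocation is power-minimal because degradedness forces the layering order and assigning the finest layers to the best users is optimal.

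Next I would pass to the continuum. Under the splitting process every individual rate vanishes, $\max_k R_k\to 0$, so $e^{2R_k}-1=2R_k+o(R_k)=2\eta_s\Delta_k+o(\Delta_k)$ and $\Delta_k\to f_\nu(\nu)\,d\nu$. Identifying $I_{k-1}$ with a function $I(\nu)$ of the layer noise variance, the recursion becomes the increment $dI=2\eta_s\,(\nu+I)\,f_\nu(\nu)\,d\nu$, i.e. the linear ODE
\begin{equation*}
I'(\nu)-2\eta_s f_\nu(\nu)\,I(\nu)=2\eta_s\,\nu\,f_\nu(\nu),\qquad I(\nu_m)=0,
\end{equation*}
with $\tilde P_m=I(\nu_M)$. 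Since $\int_{\nu_m}^{\nu}f_\nu=1-G_\nu(\nu)$, an integrating factor is $\mu(\nu)=e^{-2\eta_s}e^{2\eta_s G_\nu(\nu)}$; integrating $(\mu I)'=\mu\cdot 2\eta_s\nu f_\nu$ from $\nu_m$ and using $I(\nu_m)=0$ gives $I(\nu)=2\eta_s\,e^{-2\eta_s G_\nu(\nu)}\int_{\nu_m}^{\nu}x\,f_\nu(x)\,e^{2\eta_s G_\nu(x)}\,dx$. Evaluating at $\nu=\nu_M$, where $G_\nu(\nu_M)=0$, yields exactly $\tilde P_m=2\eta_s\int_{\nu_m}^{\nu_M}x\,f_\nu(x)\,e^{2\eta_s G_\nu(x)}\,dx$.

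The hard part will be making the limit rigorous rather than formal. Two things need control. First, the linearization: the errors $o(\Delta_k)$ must sum to $o(1)$, which should follow from $\max_k R_k\to 0$ together with boundedness of $\nu$ on $[\nu_m,\nu_M]$ and bounded variation of $G_\nu$, so that the Riemann--Stieltjes sums generated by the recursion converge to the claimed integral. Second, the squeeze: I would show that the power computed with the worst observer (an upper bound on $\tilde P_m$ via Theorem~\ref{theo:achievability}) and with the best observer (a lower bound via Theorem~\ref{theo:converse}) share the same limit, which holds because the per-cell gap $\sup_k(\nu_k^{\,+}-\nu_k^{\,-})\to 0$ under refinement; the two allocations then correspond to two Riemann sums of the same integrand with a common limit. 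Regularity of $\nu(\cdot)$ guaranteeing a well-behaved ccdf $G_\nu$ is precisely what makes this sandwich tight.
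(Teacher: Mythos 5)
Your proposal follows essentially the same route as the paper: partition by noise level, write the superposition-coding power recursion for the worst/best-observer degraded BCs, squeeze $\tilde P_m$ between the two, and pass to the linear ODE $\dot\Pi(\nu)=2\eta_s f_\nu(\nu)(\nu+\Pi(\nu))$ in the refinement limit. Your integrating-factor computation (which the paper omits, saying only that solving the ODE yields the result) is correct, and the rigor caveats you flag are exactly the points the paper also leaves informal.
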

\begin{proof}
The proof exploits theorems \ref{theo:achievability} and \ref{theo:converse} with a specific sequence $\mathfrak{N}^{(i)}$. 
Assume, without loss of generality, that $\forall x; \nu(x) \in [\nu_m;\nu_M)$.
The SNR at any receiver is given by $\gamma(x)=\frac{\tilde{P}_m}{\nu(x)}$ and the corresponding capacity is $\frac{1}{2}\log(1+\gamma(x))$.

Considering $B_0^{(0)}=\Omega$, $\mathcal{I}_0$ is bounded by the two theorems \ref{theo:achievability} and \ref{theo:converse} :
\begin{equation}
\frac{\log(1+\gamma_m)}{2 U_T}\leq \mathcal{I}_0 \leq \frac{\log(1+\gamma_M)}{2 U_T }
\end{equation}

A sequence of partitions of $\Omega$ is now built via a splitting process based on the values of $\nu(x)$. 
Assuming that for any subset $B_k^{(i)}$ , the noise values belong to some interval $[\nu_{m,k}^{(i)},\nu_{M,k}^{(i)})$, then $B_k^{(i)}$ is split such that
\begin{align*}
& B_{2k}^{(i+1)}=\left\{ x\in B_k^{(i)}; \nu(x) <  \bar{\nu}_{k}^{(i)} \right\} \\
& B_{2k+1}^{(i+1)}=\left\{ x\in B_k^{(i)}; \nu(x) \geq \bar{\nu}_{k}^{(i)} \right\}.
\end{align*}
where  $\bar{\nu}_{k}^{(i)}$ is a threshold value, which can be chosen as $\bar{\nu}_{k}^{(i)}= (\nu_{m,k}^{(i)}+\nu_{M,k}^{(i)})/2 $.
This process is equivalent to a progressively refined discretization of the users' equivalent noise distribution as illustrated in Fig.\ref{fig:second}.
\begin{figure}[!t]
\centering
\includegraphics[width=4.5in]{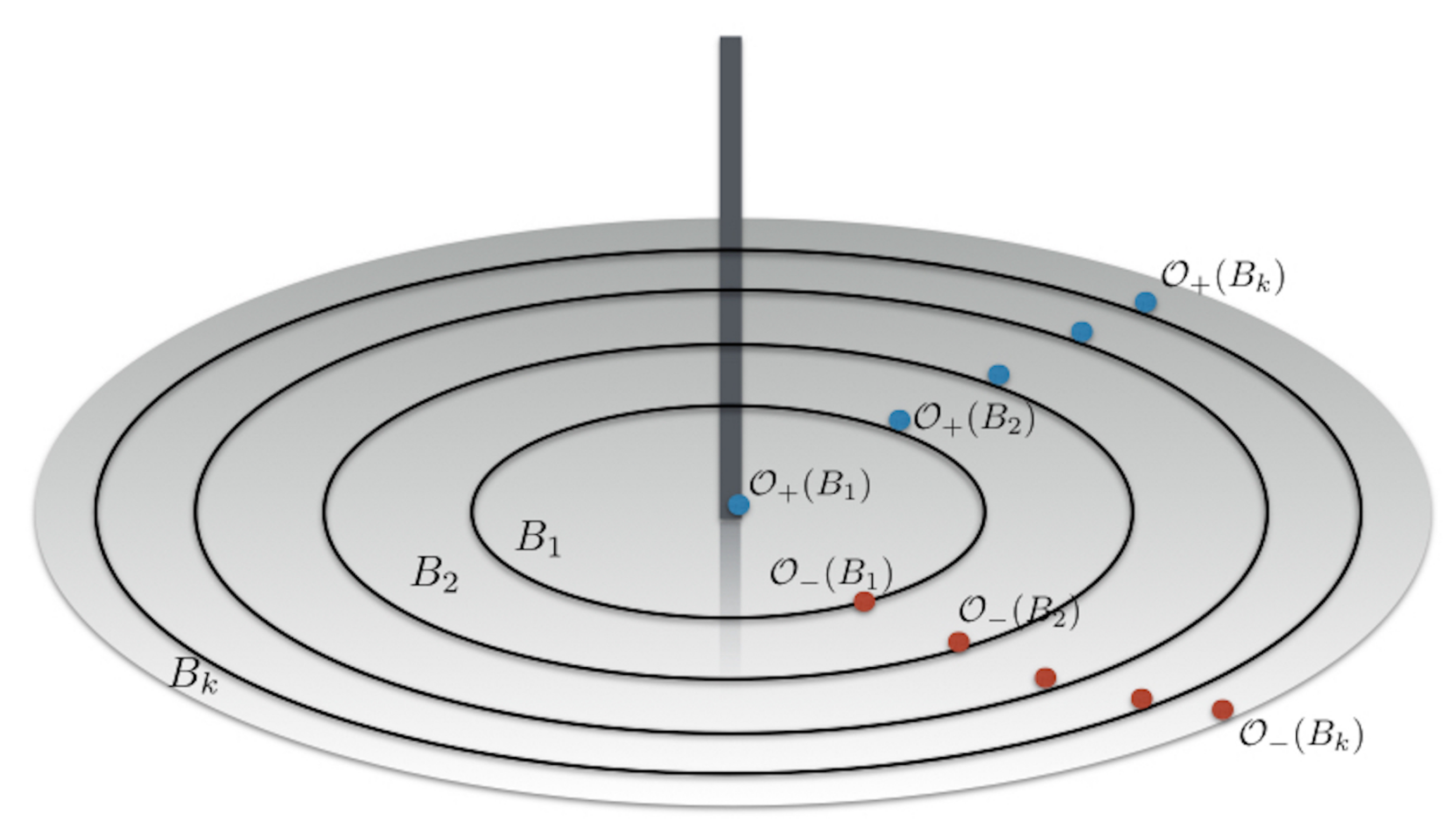}%
\hfil
\caption{Illustration of the partition of $\Omega$ in homogeneous regions, with their best (blue) and worst (red) receivers. Note that in this Figure, the best receiver associated to each subset is picked up among the nearest points to the BS (inward circle) while the worst receiver is picked up among the furthest points (outward circle). }
\label{fig:second}
\end{figure}

The requested rate associated with a subset $B_k^{(i)}$ is
\begin{equation}
\mathcal{R}_k^{(i)}=\mathcal{I}_0\cdot u_0\cdot \int_{\nu_{m,k}^{(i)}}^{\nu_{M,k}^{(i)}} f_\nu(\nu) \cdot d\nu.
\end{equation}

According to  theorems \ref{theo:achievability} and \ref{theo:converse}, at level $i$, the uniform capacity $\mathcal{C}_0$ is upper and lower bounded by the capacity of the physically feasible networks $\mathfrak{N}_{+}$ and $\mathfrak{N}_{-}$, which correspond to $2^{i}$-user BCs.
The capacity of a $K$-user BC is known to be achieved with  superposition coding by ranking the receivers from the worst to the best user. 
Due to the splitting process defined above, the threshold noise values are ordered and satisfy
\begin{align*}
& \nu_{m,0}^{(i)} < \nu_{m,1}^{(i)} < \dots < \nu_{m,2^{i}-1}^{(i)}  \\
& \nu_{M,k}^{(i)} = \nu_{m,k+1}^{(i)}
\end{align*}
Therefore it is possible to bound $\tilde{P}_m$ using the superposition coding technique with either the best or the worst receivers.
The superposition coding principle implies use of the following power for each block $k$:
\begin{align*}
& \mathfrak{N}_{+} : P_{+,k}^{(i)}= \left( 2^{2\mathcal{R}_k^{(i)}} - 1 \right)\cdot \left(\nu_{m,k}^{(i)} + \sum_{q < k} P_{+,q}^{(i)} \right)   \\
& \mathfrak{N}_{-} : P_{-,k}^{(i)}= \left( 2^{2\mathcal{R}_k^{(i)}} - 1 \right)\cdot \left(\nu_{M,k}^{(i)}  + \sum_{q< k} P_{-,q}^{(i)} \right)
\end{align*}
where $ P_{.,k}^{(i)}$ is the power density required to serve the $k^{th}$ user of level $i$ with either the best or the worst receiver.

The accumulated power required to serve all subsets up to $k$ is denoted by $ \Pi_{\cdot,k}^{(i)}$ and is given by the recursive sum of the previous equations:
\begin{equation*}
\mathfrak{N}_{\cdot} :  \Pi_{\cdot,l}^{(i)}= \sum_{k\leq l}\left( 2^{2\mathcal{R}_k^{(i)}} - 1 \right)\cdot \left(\nu_{\cdot,k}^{(i)}+ \Pi_{\cdot,k-1}^{(i)} \right) .
\end{equation*}
We denote by $L(i)$ the number of blocks at level $i$. Then, we have
\begin{equation}
 \Pi_{+,L(i)}^{(i)} \leq \tilde{P}_m \leq  \Pi_{-,L(i)}^{(i)}
\end{equation}
When $i\rightarrow \infty$, $\nu_M - \nu_m \rightarrow 0$, then $ \Pi_{+,k}^{(i)} -  \Pi_{-,k}^{(i)} \rightarrow 0$. 
And since  $\lim_{i\rightarrow\infty} \left( 2^{2\mathcal{R}_k^{(i)}} - 1\right) = 2\log2\cdot\mathcal{R}_k^{(i)}$, $\tilde{P}_m$ becomes the solution of the following Riemann integral:
\begin{equation}
 \Pi(\nu)= 2\eta_s \cdot  \int_{\nu_m}^{\nu}  f_\nu(x) \cdot (x + \Pi(x) ) \cdot dx ,
\end{equation}
for $\nu=\nu_M$. Then, writing the derivative
\begin{equation}
 \dot{\Pi}(\nu)=2 \eta_s  f_\nu(\nu) \cdot (\nu + \Pi(\nu) ),
\end{equation}
 and solving it  leads to theorem \ref{theo:SCBC-unifcapa}.
\end{proof}

\subsection{Access capacity region}
The result above can be exploited to estimate the access capacity region as defined in Def.\ref{def:accesscapacity}.

Assume that the network wants to transmit some information density directly characterized by a density $\rho(x)=\rho_T\cdot f_\rho(x)$where  $f_\rho(x)$ and $\rho_T$ are respectively the normalized traffic distribution and the sum-rate.  The equivalent noise distribution can be  obtained as:
\begin{equation}
f_\nu(\nu)= \frac{\partial}{\partial \nu}\left[ \int_\Omega f_\rho(x)\cdot\mathbb{1}\left[\nu(x)<\nu\right] \cdot dx \right], 
\end{equation}

Extending theorem \ref{theo:SCBC-unifcapa}, a distribution $\rho(x)$ is asymptotically achievable iff $\rho_T$ and $f_\nu(\nu)$ jointly satisfy
\begin{equation}
\label{eq:rho0}
 \rho_T \int_{\nu_m}^{\nu_M} x\cdot f_\nu(x) \cdot e^{2\log(2) \rho_T \cdot  G_\nu(x)} \cdot dx \leq P_ t  / (2\log(2)) .
\end{equation}
Because the left-side expression is strictly positive and monotonically increasing with $\rho_T$, for any normalized traffic distribution $f_\rho(x)$, there exists a maximal value of $\rho_T$ under which asymptotic achievability is  satisfied, which characterizes the access capacity region.

\section{Gaussian-SCMAC: definition and properties}

The SCMAC  considered in this section ais defined as the dual of the SCBC:
\begin{definition}[Gaussian-SCMAC]
\label{def:G-SCMAC}
Given the following:
\begin{itemize}
\item $\Omega$, a subspace on a Hilbert space of dimension 2,
\item$\mathcal{X}_c$, a coding alphabet,
\item $\Xi = \left\{ \Phi(x): \Omega \rightarrow   \mathcal{X}_c \right\}$, a set of input fields on $\Omega$,
\item $x_0=(0,0) \in \Omega$, the receiver position,
\item $y \in \mathbb{R}^d$  the channel output,
\end{itemize}
The SCMAC is a function that maps any input field $\Phi(x)$  to a set of conditional pdfs on $\mathbb{R}^d$:
\begin{equation}
\mathcal{H} :  \left\{\mathcal{P}_{y|\Phi(.)} ; \forall \Phi(.) \in \Xi \right\}.
\end{equation}
\end{definition}

The set of conditional pdfs associated with the Gaussian-SCMAC is
\begin{equation}
\label{eq:AWGN2}
\mathcal{P}_{y| \Phi(x)} = \mathcal{N}\left( \int_\Omega h(x)\cdot \Phi(x) \cdot dx ,\sigma^2 \right),
\end{equation}
where $h(.)$ represents a linear channel. 

\subsection{Relative transmission technique}
As for the SCBC, let us now define a transmission technique w.r.t. a partition of $\Omega$.

\begin{definition}[Physically feasible transmitter]
A {\it physically feasible transmitter} associated with an element $B_k$ of a partition $\mathcal{B}$ is defined by a selector  $\mathcal{S}$  that selects a transmission point $x_k\in B_k$, and a coding technique that maps a message to a coding value $M_k \rightarrow \Phi_{ B_k}(x) = y_k \cdot \delta(x_k)$.
\end{definition}
Two point selectors are defined: the best selector, denoted by $\mathcal{S}_{+}$, selects the point with the best (i.e., least) pathloss, while the worst selector, denoted by $\mathcal{S}_{-}$,  selects the point with the worst (i.e, highest) pathloss.
For a given partition $\mathcal{B}$ with a given selection mode,  the best and the worst physically feasible networks are denoted as before  by $\mathfrak{N}_{+}(\mathcal{B})$ and  $\mathfrak{N}_{-}(\mathcal{B})$.

\begin{definition}[Relative transmission technique]
A transmission technique $(M_1,\dots,M_K,n,\epsilon)$ relative to a network $\mathfrak{N}(\mathcal{B})$ in MAC mode,  is given by a set of individual messages $M_k$ in $n$ channel uses, sent independently by the physically feasible transmitters, and where the unique receiver observes and decodes jointly the messages, denoted by $(\hat{m}_1, \dots, \hat{m}_K)$, with an average error probability lower than $\epsilon$. A transmission technique is asymptotically feasible without error if $\lim_{n\rightarrow\infty} \epsilon = 0$.
\end{definition}
Under these assumptions and for a given selector, $\mathfrak{N}(\mathcal{B})$ becomes equivalent to a classical Gaussian-MAC for which the capacity region is perfectly known.

\subsection{Asymptotic achievability, converse and capacity}
The definitions \ref{def:relative_achievability},\ref{def:asymptachievability} and \ref{def:accesscapacity} are general enough to apply for the SCMAC. 

Considering that the transmission powers are transferable between the different transmitters, the duality between MAC and BC is strict and theorems \ref{theo:achievability} and \ref{theo:converse} also apply where the best and worst receivers are resp. replaced by the best and wort transmitters. 
Last but not least, the uniform capacity and the capacity region are the same for the SCMAC and SCBC.
It is worth noting that even if the uniform capacity is the same, the corresponding optimal power allocation per node is not the same in both directions.

\section{Application example}
Consider a unique cell covering a disk (of radius $R$) in both uplink and downlink modes. 
For the sake of simplicity, a simple power-law pathloss  model and an omnidirectional antenna are considered with no shadowing:
$h(x)=h_0\cdot |x|^{-\alpha}$, 
where $h_0$ and $\alpha$ represent resp. the reference pathloss and the attenuation slope. 
Note that interference from neighboring cells is not considered. The reader is referred to \cite{gorce2014energy} where the idea of the SCBC has been implicitly used with an SINR distribution taking  interference into account.
The users are uniformly distributed, i.e.  $u(x)=u_0$ with constant quantity of information  $\mathcal{I}_0$.

\subsection{Downlink mode}
In the downlink,  the power transmission is limited to a certain power $P_M$. Using the model assumptions described above to solve \eqref{eq:ccdf}, the equivalent noise distribution is given by:
\begin{equation}
f_\nu(\nu) =  \frac{2}{\alpha} \cdot \left(\frac{\nu }{\nu_R}\right)^{2/\alpha-1} ,
\end{equation}
and the ccdf is denoted by $G(f)=1-  \left(\frac{\nu }{\nu_R}\right)^{2/\alpha}$,
where $\nu_R$ is the equivalent noise level at the cell edge. 

Under these assumptions, the following holds:
\begin{theorem}[Uniform capacity of an homogenous circular cell]
The uniform capacity of a wireless homogeneous cell with a power-law pathloss and with a radius $R$ is given by:
\begin{equation}
\label{eq:capaU_disk}
\mathcal{I}_0 \leq \mathcal{C}_0=\frac{1}{2\log2\cdot U_T}\cdot C_{U,\alpha}\left(\gamma_R \right) ,
\end{equation} 
where $\gamma_R$ is the SNR at the cell edge and $C_{U,\alpha}(\cdot)$ is the inverse function of $f(x)=x^{2+\frac{\alpha}{2}}\cdot {}_1\!F_1 \left(1;2+\frac{\alpha}{2} ; x \right)$ with ${}_1\!F_1(a;b;x)$ the confluent hypergeometric function (sec.9.21; \cite{gradshteyn2000table}). Note that $f(x)$ is continuous and strictly increasing and its inverse is defined and unique on $\mathbb{R}^{+}$.
\end{theorem}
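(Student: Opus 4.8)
The plan is to specialize Theorem~\ref{theo:SCBC-unifcapa} to the homogeneous disk and evaluate the resulting power integral in closed form. First I would observe that the uniform capacity is attained exactly when the minimal power saturates the constraint, $\tilde P_m = P_M$: since the left-hand side of the minimal-power formula is strictly positive and strictly increasing in $\mathcal I_0$ (equivalently in $\eta_s$), this fixes $\mathcal C_0$ as the unique value of $\mathcal I_0$ solving $\tilde P_m(\mathcal I_0)=P_M$. Dividing the constraint through by $\nu_R$ recasts it as an equation for the edge SNR $\gamma_R = P_M/\nu_R$.

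Next I would insert the cell-specific statistics. From $\nu(x)=(\sigma^2/h_0)\,|x|^{\alpha}$ together with the uniform density on the disk one obtains $G_\nu(\nu)=1-(\nu/\nu_R)^{2/\alpha}$ and $f_\nu(\nu)=-G_\nu'(\nu)$, with the noise supported on $[0,\nu_R)$, so that $\nu_m=0$ and $\nu_M=\nu_R$. Substituting these into the minimal-power integral of Theorem~\ref{theo:SCBC-unifcapa} and performing the change of variable $t=(\nu/\nu_R)^{2/\alpha}$ (so that $f_\nu(\nu)\,d\nu=dt$, $\nu=\nu_R\,t^{\alpha/2}$, and $G_\nu=1-t$) collapses the two-dimensional spatial problem to the elementary one-dimensional form
\begin{equation*}
\gamma_R=\frac{\tilde P_m}{\nu_R}=2\eta_s\, e^{2\eta_s}\int_0^1 t^{\alpha/2}\,e^{-2\eta_s t}\,dt .
\end{equation*}

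The analytic core is then to recognize this integral. I would write $\int_0^1 t^{\alpha/2}e^{-2\eta_s t}\,dt$ as a lower incomplete gamma function $\gamma(\tfrac{\alpha}{2}+1,\,2\eta_s)$ and use its confluent-hypergeometric representation $\gamma(a,z)=\tfrac{z^a}{a}\,{}_1F_1(a;a+1;-z)$ together with Kummer's transformation ${}_1F_1(a;b;z)=e^{z}\,{}_1F_1(b-a;b;-z)$ to cancel the $e^{2\eta_s}$ prefactor. This leaves $\gamma_R$ as an explicit function of $2\eta_s$ built from ${}_1F_1(1;2+\tfrac{\alpha}{2};2\eta_s)$, i.e. of exactly the type displayed in the statement. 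This is the main obstacle: it requires carrying the constant $\tfrac{\alpha}{2}+1$ and the argument $2\eta_s$ cleanly through the incomplete-gamma and Kummer identities so that the function $f$ to be inverted coincides with the claimed expression.

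Finally I would invert. Because ${}_1F_1(1;2+\tfrac{\alpha}{2};z)=\sum_{n\ge 0} z^{n}/(2+\tfrac{\alpha}{2})_n$ has strictly positive coefficients, the resulting $f$ is a product of positive, strictly increasing factors on $\mathbb R^{+}$, hence continuous and strictly increasing with $f(0)=0$ and $f(z)\to\infty$ as $z\to\infty$; this is precisely the regularity asserted at the end of the theorem and guarantees a unique inverse $C_{U,\alpha}=f^{-1}$ on $\mathbb R^{+}$. Solving $f(2\eta_s)=\gamma_R$ then gives $2\eta_s=C_{U,\alpha}(\gamma_R)$, and recalling $\eta_s=\log 2\cdot\mathcal I_0\,U_T$ yields $\mathcal C_0=\mathcal I_0=C_{U,\alpha}(\gamma_R)/(2\log 2\cdot U_T)$, which is \eqref{eq:capaU_disk}.
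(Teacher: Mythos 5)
Your route is the same as the paper's: specialize Theorem~\ref{theo:SCBC-unifcapa} to the uniform disk, note that $G_\nu(\nu)=1-(\nu/\nu_R)^{2/\alpha}$ with $\nu_m=0$, $\nu_M=\nu_R$, change variables to $t=(\nu/\nu_R)^{2/\alpha}$, recognize a lower incomplete gamma function, and convert it to a ${}_1F_1$ via Kummer's transformation. Everything up to
$\gamma_R = 2\eta_s e^{2\eta_s}\int_0^1 t^{\alpha/2}e^{-2\eta_s t}\,dt$
is correct and is exactly what the paper (tersely) does, and your monotonicity/inversion argument at the end is sound.

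The problem is the step you yourself flag as ``the main obstacle'': carrying the constants through does \emph{not} land on the stated $f$. Writing $\int_0^1 t^{\alpha/2}e^{-zt}dt = z^{-(1+\alpha/2)}\,\gamma\!\left(1+\tfrac{\alpha}{2},z\right)$ and using $\gamma(a,z)=a^{-1}z^a e^{-z}\,{}_1F_1(1;a+1;z)$ gives
\begin{equation*}
\gamma_R \;=\; \frac{2\eta_s}{1+\alpha/2}\;{}_1F_1\!\left(1;2+\tfrac{\alpha}{2};2\eta_s\right),
\end{equation*}
i.e.\ a function that is \emph{linear} in $x=2\eta_s$ times the hypergeometric factor, not $x^{2+\alpha/2}$ times it. The low-SNR sanity check confirms this: as $\eta_s\to 0$ the minimal-power integral reduces to $2\eta_s\,\mathbb{E}[\nu]=2\eta_s\nu_R/(1+\alpha/2)$, so $\gamma_R$ must vanish linearly in $\eta_s$, which the stated $f(x)=x^{2+\alpha/2}{}_1F_1(1;2+\tfrac{\alpha}{2};x)$ does not. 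The discrepancy originates in the paper's own intermediate display $P_M=\nu(R)\cdot 2\eta_s e^{2\eta_s}\Gamma(1+\tfrac{\alpha}{2},2\eta_s)$, which omits the Jacobian factor $(2\eta_s)^{-(1+\alpha/2)}$ from the substitution $s=2\eta_s t$; restoring it produces the linear form above. So if you execute your plan faithfully you will contradict the displayed $f$ rather than recover it — you should either correct the stated $f$ to $\tfrac{x}{1+\alpha/2}\,{}_1F_1(1;2+\tfrac{\alpha}{2};x)$ or explicitly note that you are adopting the paper's (unnormalized) incomplete-gamma convention. The qualitative conclusion (a continuous, strictly increasing $f$ with a unique inverse defining $\mathcal{C}_0$) survives either way, but the quantitative formula does not.
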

\begin{proof}
The proof follows straightforward computations from Theorem\ref{theo:SCBC-unifcapa}, using the former assumptions, leading to the following: 
\begin{equation}
P_M=\nu(R)\cdot 2\eta_s\cdot e^{2 \eta_s}\cdot \Gamma\left(1+\frac{\alpha}{2}, 2 \eta_s \right).
\end{equation}
The final result is obtained using the relationship between the incomplete gamma function $\Gamma(a,x)$ and $\Phi(a,b;x)$. 
\end{proof}

It is remarkable that the uniform capacity relies only on the channel power law, the total number of users, and the SNR ($P_M/\nu(R)$) at the cell edge. This capacity is shown in Fig.\ref{fig:third}.

\begin{figure}[!t]
\centering
\includegraphics[width=5.5in]{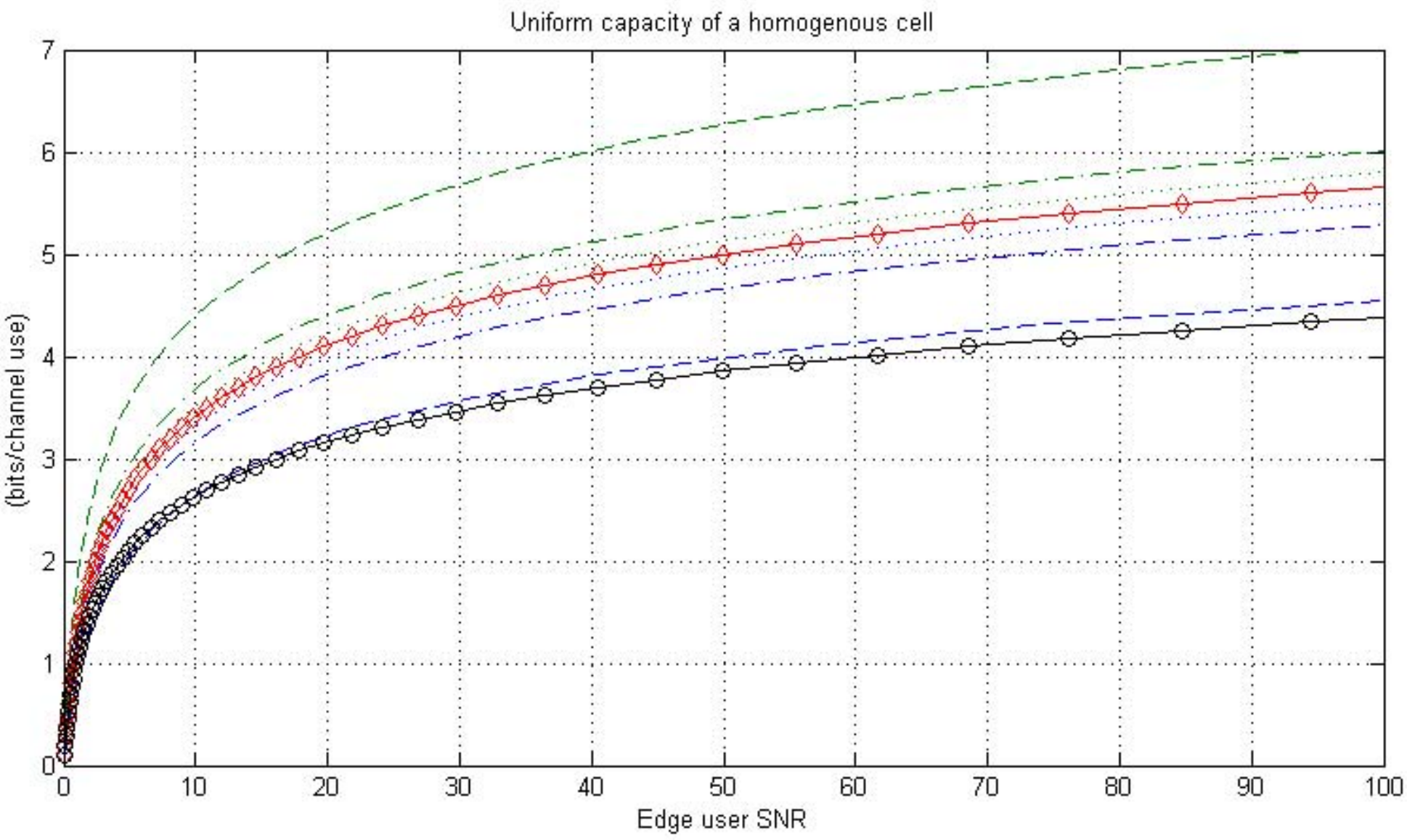}%
\caption{
Uniform capacity for a cell with $\alpha=3.65$ as a function of the edge user SNR. It is worth noting that this SNR is the ratio of the total power received by an edge user from the BS, to the receiver noise. This should not be confused with the effective SINR that is the ratio of the power devoted to this user to the noise plus the remaining power seen as interference. The result of Th.$5.1$ is represented in red (diamonds) under a normalized form ($\mathcal{I}_0\cdot U_T$), and the successive approximations with either best (green) or worst (blue) receivers, under a partition of $3$,$10$, and $25$ subsets (resp. dash, dash-dot and dot lines) are drawn. The uniform capacity theoretically achievable with pure time sharing is also plotted (black, circles).}
\label{fig:third}
\end{figure}
From a practical perspective, the theoretical limit relying on  superposition coding with an infinite number of layers is not feasible. The limit can be approached by grouping the users into $K$ subsets corresponding to a partition of the space as done for the proof of Theorem\ref{theo:SCBC-unifcapa}. The bound with the worst receivers shown in Fig.\ref{fig:third} is achievable with an appropriate scheduling and power association . It is only required to split the service area of a cell into $K$ subsets. Then, superposition coding between the different subsets can be used while inside each subset, a classical TDMA is performed. This approach highlights the potential gain of non orthogonal multiple access (NOMA) techniques for 5G.

\subsection{Uplink mode}
The SCMAC considered in this paper assumes transferable powers, and thus,  by the duality theorem, the uniform capacity is equal to that of the SCBC. 
From a practical perspective however, the power sharing rule is different.
The theoretical limit can be approached using one of the partitions $B^{(i)}$ of $\Omega$. For a given subset $B_k^{(i)}$, and to ensure the achievability of rates, the uniform capacity is lower bounded by using the worst transmitters, leading to a classical MAC with $K$ users. 
To allow perfect decoding, the power allocation starts with the furthest transmitter with a sufficient power to serve the rate of the subset $B_K$. Then, this power is considered as noise for the next power allocation:
\begin{equation}
P_k= \left( 2^{2\mathcal{R}_k^{(i)}} - 1\right) \cdot \left( 1 + \sum_{l>k} P_l / \nu_{M,l}^{(i)}  \right)\cdot \nu_{M,k}^{(i)}
\end{equation}
Under such an assumption, the uniform capacity can be approached, as well as for the SCBC. 
This provides a practical way to evaluate and exploit the additional capacity superposition coding may bring to cellular networks. In the setup above, the uniform capacity gain is about $30\%$ compared to time sharing.

\section{Conclusion}
In this paper we have proposed definitions of the Gaussian-SCBC and the Gaussian-SCMAC representing a wireless cell in downlink or uplink modes. We have defined the uniform capacity and the access capacity region and have given   general expressions for them by defining a sequence of partitions on $\Omega$, allowing us to approximate the continuum. 
With a simple example, we  have shown that the bound may be tight even for some discrete sets of users, and we have shown that the capacity thus defined represents a strict limit of a cell capacity.

The approach is limited to Gaussian channels, and further works will be done to exploit known results in information theory for the BC and MAC, to extend the ideas in this paper to fading channels, multi-antennas and multi-cells scenarios.



\bibliographystyle{IEEEtran}
\bibliography{IEEEabrv,ISIT-gorce}

\end{document}